\theoremstyle{plain}
\newtheorem{thm}{Theorem}[section]
\theoremstyle{definition}
\newtheorem{defn}{Definition}[section]
\newtheorem{exmp}{Example}[section]
\theoremstyle{remark}
\newtheorem*{rem}{Remark}
\begin{document}
\title{\center\Large\bf Aphids, Ants and Ladybirds: a mathematical model predicting their population dynamics}
\author{{\bf Gianluca Gabbriellini} \\
        Osnago, 23875, Italy\\
        {\tt gianluca.gabbriellini@gmail.com}
}
\maketitle
\thispagestyle{empty}

\begin{abstract}

The interaction between aphids, ants and ladybirds has been investigated from an ecological point of view since many decades, while there are no attempts to describe it from a mathematical point of view.
This paper introduces a new mathematical model to describe the within-season population dynamics in an ecological patch of a system composed by aphids, ants and ladybirds, through a set of four differential equations. The proposed model is based on the Kindlmann and Dixon set of differential equations \cite{Kindlmann2003}, focused on the prediction of the aphids-ladybirds population densities, that share a prey-predator relationship. The population of ants, in mutualistic relationship with aphids and in interspecific competition with ladybirds, is described according to the Holland and De Angelis mathematical model \cite{HollandDeAngelis2010}, in which the authors faced the problem of mutualistic interactions in general terms. The set of differential equations proposed here is discretized by means the Nonstandard Finite Difference scheme, successfully applied by Gabbriellini to the mutualistic model \cite{gabbriellini2014}. 
The constructed finite-difference scheme is positivity-preserving and characterized by four nonhyperbolic steady-states, as highlighted by the phase-space and time-series analyses. Particular attention is dedicated to the steady-state most interesting from an ecological point of view, whose asymptotic stability is demonstrated via the Centre Manifold Theory. The model allows to numerically confirm that mutualistic relationship effectively influences the population dynamic, by increasing the peaks of the aphids and ants population densities. Nonetheless, it is showed that the asymptotical populations of aphids and ladybirds collapse for any initial condition, unlike that of ants that, after the peak, settle on a constant asymptotic value.

\end{abstract}

\section{Introduction}

It is well known that aphids and ants share a symbiotic relationship: ants collect honeydew produced by aphids using it as food and the ants reciprocate by protecting aphids against predators \cite{novgorodova2005,muller2009}. In such cases the symbiotic relationship is defined \textit{mutualism} \cite{Boucher1982}. Holland and De Angelis in 2010 built a mathematical model that links the consumer functional responses of a mutualistic species with resources supplied by another; through the phase-plane analysis they shown that their set of differential equations correctly predicts the enhanced population growth rates of both species sharing the mutualistic relationship. Although in this paper the relationship between aphids and ants was hypothesized as mutualistic, an alternative and interesting vision is given in \cite{palsson2002}.

One of the most aggressive predators threatening the aphids colonies is the ladybird \cite{novgorodova2005}, in both larval and adult stages. The voracity of the ladybirds has left researchers thinking to use them as a biological control agent against the proliferation of the aphids, since they constitute a problem for crops \cite{VanHemden2017}. Although there is a large scientific community which considers the contribution of the ladybirds to be effective in eradication of aphids, several authors demonstrated through mathematical models supported by experimental tests that they are ineffective \cite{Kindlmann2003,Kindlmann2015}. For a complete discussion and the full bibliography available on this topic refer to \cite{Kindlmann2015}. The mechanism regulating the population dynamics of ladybirds and aphids has been described by Kindlmann and Dixon in 2003 \cite{Kindlmann2003}. The authors proposed a population dynamical model that incorporates an optimization of egg distribution, offering an explanation as to why ladybirds have little effect on the aphids population dynamics. 

The relationship between aphids, ants and ladybirds has been known and studied from a an ecological \cite{oliver2008} and chemicals point of view \cite{pasteels2007}, while a mathematical model that describes this dynamical system it seems not to have been developed yet.

This paper introduces a new set of differential equations describing the within-season population dynamics of aphids, ladybirds and ants. It includes the model proposed by Kindlmann in 2003 to reproduce the within-season aphids-ladybirds interaction and extends it by adding the mutualistic aphids-ants relationship, following the model proposed by Holland and De Angelis in \cite{HollandDeAngelis2010}. 

Starting from the continuous-time model, a discrete-time version was proposed by applying the Non Standard Finite Difference (NSFD) scheme \cite{mickens89}. This mathematical framework allows to numerically integrate nonlinear differential equations of interest for a large variety of scientific fields, such as equations modeling the stellar structure, the dynamics of HIV transmission, heat transport equations and many other topics \cite{mickens05a}. The high reliability of the NSFD respect to standard finite difference schemes proved in several works \cite{mickens89,mickens94,mickens05a,mickens05}. Following the NSFD rules, the continuous-time model proposed by Holland and De Angelis in 2010 has been converted into a set of difference equations, proving that the NSFD performs better respect to a standard finite difference approach during the transient dynamics, especially with high time steps \cite{gabbriellini2014}; since the mutualistic interaction has a fundamental role in the model proposed in this paper, the discretization measures that have been successfully applied in \cite{gabbriellini2014} have been adopted also in this study.

\section{Aphids, ants and ladybirds: the continuous-time model}\label{sec:cont_model}
In this section a set of differential equations describing the within-season population dynamics of a system composed by aphids, ants and ladybirds living in an ecological patch is proposed. The system, that for conciseness reasons can be cited later as AAL, is the following:
\begin{subnumcases}{\label{eq:AAL}}
{\label{eq:AALa}}\frac{dh(t)}{dt}=ax(t) \\ 
{\label{eq:AALb}}\frac{dx(t)}{dt}=(r_1-h(t))x(t)-\frac{\nu p x(t)y(t)}{b+p x(t)+y(t)}+\alpha_{12}\frac{x(t)y(t)}{\epsilon_1+y(t)} \\
{\label{eq:AALc}}\frac{dy(t)}{dt}=r_2y(t)+\alpha_{21}\frac{x(t)y(t)}{\epsilon_2+x(t)}-d_2y^2(t) \\
{\label{eq:AALd}}\frac{dz(t)}{dt}=-\frac{\nu z^2(t)}{b+px(t)+z(t)}-ky(t)z(t),
\end{subnumcases}

with initial conditions $h(0)=0,\,x(0)=x_0\ge 0, \,y(0)=y_0\ge 0\,z(0)=z_0\ge 0$. The (\ref{eq:AAL}) is a system of four non-linear differential equations regulating $h(t),x(t),y(t),z(t)$, respectively the cumulative density of the aphids, densities of aphids, ants and ladybirds at time $t$. Meaningful values of $h(t),x(t),y(t),z(t)$ are non-negative. 

The parameter $a$ is the scaling constant relating aphids cumulative density to its own dynamics; $r_1$ and $r_2$ are respectively the maximum potential growth rate of the aphids and ants; $\nu$ is the ladybirds voracity; $p$ is the ladybirds preference for aphids; $b$ is a parameter of the functional response of the aphids; $\alpha_{12}$ is a positive term quantifying the advantages that the presence of ants induces on the growth of aphids; $\alpha_{21}$ is a positive term quantifying the advantages that the presence of aphids induces on the growth of ants; $\epsilon_1$ and $\epsilon_2$ are the half-saturation terms for aphids and ants respectively; $k$ is the coefficient of interspecific competition between ants and ladybirds; $d_2$ is the self-limiting term of ants population. The initial condition $z_0$ is defined by the number of eggs laid there by adults \cite{Kindlmann2003}.
In order to clarify how the (\ref{eq:AAL}) works, each equation is described below: 
\begin{itemize}
\item The formulation given in (\ref{eq:AALa}) expresses the cumulative density $h(t)$ as a regulatory term for aphids, instead of the instantaneous density \cite{Kindlmann2003}.
\item The (\ref{eq:AALb}) describes changes in aphids density through the sum of three contributes: the first is an auto-regulatory term allowing aphids population density to decline even in the absence of natural enemies \cite{Kindlmann2003}; the second expresses the decrease of the aphids population, related to predator voracity $\nu$ and predator's preference for prey $p$ \cite{Kindlmann2003}; the last term, modulated by the coefficient $\alpha_{12}$, quantifies the advantages that the presence of ants induces on the growth of the aphids, according to the research of Holland and De Angelis about the mutualistic relationships \cite{HollandDeAngelis2010}.
\item The (\ref{eq:AALc}) reproduces the ants population density, in which the first and last terms represent respectively the linear effect due to the intrinsic population growth rate and a quadratic one to modify the growth rate with density dependent self–limitation; the second term, regulated by the coefficient $\alpha_{21}$, quantifies the advantages that the presence of aphids induces on the growth of the ants.
\item The (\ref{eq:AALd}) describes the decreases of the ladybirds population caused by two terms: a first reproducing the cannibalistic effect \cite{Kindlmann2003} and a second one expressing the competition between ladybirds and ants \cite{Majerus}.
\end{itemize}

The analysis of the continuous-time model was not reported in the paper due to the achievement of partial results. The complexity of the proposed set of differential equations makes the effort to carry out the stability analysis of the steady-states remarkable, and suitable to be dealt in a separate work.

\section{Discrete-time model}

To carry out this step, the continuous variable $t\in [0,\infty)$ must be replaced by the discrete variable $n\in\,\mathbb{N}$ and the variables $h(t),x(t),y(t)$ and $z(t)$ must take discrete values $h_n,x_n,y_n$ and $z_n$. The result is a set of difference equations. 

The numerical integration of ordinary differential equations using traditional methods could produce different solutions from those of
the original ODE \cite{cresson16,mickens05}. In particular, using a discretization step-size larger than some relevant time
scale, is possible to obtain solutions that may not reflect the dynamics of the original system. To overcome this problem, Ronald
Mickens, in 1989, suggested what is known as the Nonstandard Finite Difference (NSFD) method \cite{mickens89}. 

As introduced, in the AAL model the mutualistic interaction has an important role. Since in the paper \cite{gabbriellini2014} the NSFD scheme best has performed respect to a standard finite difference scheme, in the present research the author has followed the same way, extending the approach also for the other terms of the AAL dynamical system. 

\subsection{Non-Standard Finite Difference schemes (NSFD)}\label{sub:nsfd_rules}

Let $f: \mathbb{R}^n\longrightarrow \mathbb{R}^n$ sufficiently smoothed, and $\xi(t):[0,+\infty)\longrightarrow \mathbb{R}^n$ the coordinates. Given the differential equation 
\begin{align}\label{eq:diff_eq_app}
\frac{d\xi(t)}{dt}=f(\xi(t),K),
\end{align}
with initial condition $\xi_0=\xi(t=t_0)\in\mathbb{R}^n$ and the system parameters identified by $K=(K_1,K_2,\ldots)$.
We also suppose that 
\begin{align}\label{eq:FD_app}
\frac{\xi_{n+1}-\xi_n}{\Delta t}=F(\xi_n,K),
\end{align}
is the difference equation corresponding to \eqref{eq:diff_eq_app}.
In order to construct a NSFD scheme, the following rules have to be respected \cite{mickens05}:
\begin{enumerate}
\item[I.] The order of the discrete derivative should be equal to the order of the corresponding derivatives of the differential equation. 
\item[II.] Denominator functions for the discrete representation must be nontrivial. The following replacement is then required: 
\begin{align}
\Delta t\longrightarrow\phi(\Delta t)+O(\Delta t^2),
\end{align}
where $\phi(\Delta t)$ is such that $0<\phi(\Delta t)<1$, $\forall\Delta t>0$. The explicit form of $\phi(\Delta t)$ is given by the following expression \cite{mickens00}:
\begin{equation}\label{eq:phi}
\phi(\Delta t)=\frac{1-e^{-q\Delta t}}{q}.
\end{equation}
Let the Jacobian $J_{\Phi}(\xi)=(j_{ij})_{4\times 4}$, with $\xi=(h_n,x_n,y_n,z_n)$ and
\begin{align}
j_{ij}=\frac{\partial \Phi_i}{\partial\xi_j},
\end{align} 
let $\Omega$ the set of the eigenvalues of the Jacobian, the optimal value of $q$ must respect the condition
\begin{equation}\label{eq:condition}
q\ge\max_{\Omega}\Big\{\frac{\lambda^2}{2|\Re(\lambda)|}\Big\}\,\,\,\rm if\,\,\,  \Re(\lambda)\neq 0\,\, for    \,\,\,\lambda\in\Omega.
\end{equation}
\item[III.] Nonlinear terms could be replaced by nonlocal discrete representations.
\item[IV.] Special conditions that hold for the solutions of the differential equations should also hold for the solutions of the finite difference scheme.
\item[V.] The scheme should not introduces spurious solutions.
\end{enumerate}

\begin{defn}[PESN condition] A NSFD scheme preserving the solution positivity is called \textit{Positive and Elementary Stable Nonstandard} (PESN) method. 
\end{defn}
The positivity condition is particularly advantageous to discretize dynamical systems describing a population evolution, being a population density never negative. 

\subsubsection{Nonlocal representation and positivity}
In order to construct a positivity-preserving scheme, it is required that the nonlocal representation of the nonlinear terms (condition III) is chosen carefully. To the author's knowledge, there are no attempts to formalize the way in which the nonlocal approximation are constructed, leaving this fundamental step to the experience. In this paper, a simple rule was proposed.
 
\begin{thm}\label{thm:pos}
Let the differential equation \eqref{eq:diff_eq_app}, $g(\xi,K)$ a function depending in nonpolynomial way on $\xi$, always positive for all $\xi\in\mathbb{R}^n$, and $K_1,K_2,\ldots\in\mathbb{R}^+$. If 
\begin{align}\label{eq:forma_standard}
f(\xi,K)=(-1)^p\xi^m g(\xi,K),
\end{align}
with $m\in\mathbb{N}^+$ and $p=0,1$. Then, a sufficient condition to have a NSFD respecting the positivity condition is:
\begin{align}\label{eq:pos_cond_app}
F(\xi_n,K)=\frac{(-1)^p}{2}\bigg[\big(1-(-1)^{p+1}\big)\xi_n^m+\big(1-(-1)^p\big)\xi_n^{m-1}\xi_{n+1}\bigg]g_n(\xi_n,K).
\end{align}
\end{thm}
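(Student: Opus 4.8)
The plan is to prove the positivity claim by a direct case analysis on the exponent $p\in\{0,1\}$, using that the bracket in \eqref{eq:pos_cond_app} degenerates to a single monomial in each case. For $p=0$ we have $(-1)^{p}=1$, hence $1-(-1)^{p+1}=2$ and $1-(-1)^{p}=0$, so \eqref{eq:pos_cond_app} reduces to the purely \emph{local} representation $F(\xi_n,K)=\xi_n^{m}\,g_n(\xi_n,K)$. For $p=1$ we have $(-1)^{p}=-1$, hence $1-(-1)^{p+1}=0$ and $1-(-1)^{p}=2$, so \eqref{eq:pos_cond_app} reduces to the \emph{nonlocal} representation $F(\xi_n,K)=-\,\xi_n^{m-1}\xi_{n+1}\,g_n(\xi_n,K)$. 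In other words, the two cases are exactly Mickens' prescription of keeping a positive reaction term local and rendering a negative decay term implicit.

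Next I would substitute each of these into the scheme \eqref{eq:FD_app}, with $\Delta t$ replaced by the denominator function $\phi=\phi(\Delta t)$ (which is strictly positive for all $\Delta t>0$ by rule II), and solve for $\xi_{n+1}$. For $p=0$ the scheme is explicit and yields $\xi_{n+1}=\xi_n\bigl(1+\phi\,\xi_n^{m-1}g_n(\xi_n,K)\bigr)$. For $p=1$ the scheme is implicit but affine in $\xi_{n+1}$ with leading coefficient $1+\phi\,\xi_n^{m-1}g_n(\xi_n,K)$, so it is uniquely solvable and gives $\xi_{n+1}=\dfrac{\xi_n}{1+\phi\,\xi_n^{m-1}g_n(\xi_n,K)}$.

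To finish, I would argue by induction on $n$: assuming $\xi_n\ge 0$, the hypotheses give $g_n(\xi_n,K)>0$ (since $g$ is positive everywhere), $\phi>0$, and $\xi_n^{m-1}\ge 0$ (with the convention $\xi_n^{0}=1$ when $m=1$), so the common factor $1+\phi\,\xi_n^{m-1}g_n(\xi_n,K)$ is strictly positive; hence $\xi_{n+1}$ has the same sign as $\xi_n$ in both cases, giving $\xi_n\ge 0\Rightarrow\xi_{n+1}\ge 0$ (and, likewise, strict positivity is propagated). This establishes that the scheme \eqref{eq:pos_cond_app} satisfies the positivity condition. I do not expect a genuine obstacle here: the argument is substitution followed by a sign check, and the only points deserving attention are the well-posedness of the implicit $p=1$ update (guaranteed by the strictly positive leading coefficient) and the correct reading of $\xi_n^{m-1}$ in the boundary case $m=1$.
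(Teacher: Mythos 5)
Your proof is correct and follows essentially the same route as the paper's: reduce the bracket to its two degenerate cases for $p=0,1$, substitute into the scheme, solve for $\xi_{n+1}$, and check signs. Your explicit induction on $n$ (via the strictly positive factor $1+\phi\,\xi_n^{m-1}g_n(\xi_n,K)$, plus the remarks on well-posedness of the implicit update and the $m=1$ case) is in fact a slightly tighter rendering of the paper's concluding step, which asserts $0\le\xi_{n+1}(p=1)$ somewhat abruptly.
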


\begin{proof}
By substituting the \eqref{eq:pos_cond_app} in \eqref{eq:FD_app},
\begin{align}
\frac{\xi_{n+1}-\xi_n}{\phi(\Delta t)}=\frac{(-1)^p}{2}\bigg[\big(1-(-1)^{p+1}\big)\xi_n^m+\big(1-(-1)^p\big)\xi_n^{m-1}\xi_{n+1}\bigg]g_n(\xi_n,K),
\end{align}
\begin{align}
\xi_{n+1}=\frac{\frac{(-1)^p}{2}g_n(\xi_n,K)\phi(\Delta t)\big(1-(-1)^{p+1}\big)\xi_n^m+\xi_n}{1-\frac{(-1)^p}{2}\xi_n^{m-1}\big(1-(-1)^{p}\big)g_n(\xi_n,K)\phi(\Delta t)}.
\end{align}
It is straightforward to show that
\begin{align}
\xi_{n+1}(p=0)=g_n(\xi_n,K)\phi(\Delta t)\xi_n^m+\xi_n,
\end{align}
and
\begin{align}
\xi_{n+1}(p=1)=\frac{\xi_n}{1+g_n(\xi_n,K)\phi(\Delta t)\xi_n^{m-1}}.
\end{align}
Being $\xi_{n+1}(p=0)\ge \xi_n$ and $\xi_{n+1}(p=1)\le \xi_n$, it follows that
\begin{align}\label{eq:th}
0 \le \xi_{n+1}(p=1)\le \xi_{n+1}(p=0).
\end{align}
The \eqref{eq:th} shows the positivity of the solution.
\end{proof}

\begin{rem}
The rule \eqref{eq:pos_cond_app} allows to build a \textit{minimalist} positivity-preserving finite difference scheme. More advanced nonlocal representations may be required \cite{mickens05}.  
\end{rem}

\begin{exmp}
Let the differential equation
\begin{align}\label{eq:example}
\frac{dx(t)}{dt}=-\frac{x^3}{x^2+\theta}.
\end{align}
with $\theta\in\mathbb{R}^+$. By comparing \eqref{eq:example} with \eqref{eq:forma_standard}, it follows that $g(x,\theta)=\frac{1}{x^2+\theta}$, $p=1$ and $m=3$. Following the \eqref{eq:pos_cond_app}, it result
\begin{align}
F(x_n,\theta)=-\frac{1}{2}\bigg[0+2x_n^2x_{n+1}\bigg]\frac{1}{x_n^2+\theta}=-\frac{x_n^2x_{n+1}}{x_n^2+\theta}.
\end{align}
The NSFD scheme is:
\begin{align}
\frac{x_{n+1}-x_n}{\phi(\Delta t)}=-\frac{x_n^2x_{n+1}}{x_n^2+\theta},
\end{align} 
Then, with a simple algebraic manipulation
\begin{align}
x_{n+1}=\frac{x_n(x_n^2+\theta)}{x_n^2\big(1+\phi(\Delta t)\big)+\theta},
\end{align}
that is a positivity-preserving scheme. 
\end{exmp}

\subsection{AAL: the discrete-time model}

The finite difference scheme of the AAL dynamical system was built by extending the work carried out in \cite{gabbriellini2014}, in which a stable NSFD scheme for a system of two populations in mutualistic relationship has been proposed. In that work, the proposed NSFD scheme respected the rules listed in the Subsection \ref{sub:nsfd_rules}, including the nonlocal approximation to ensure the positivity of the solutions (although not yet formalized). For the AAL, the same approach was followed, by extending the NSFD rules also to the non-mutualistic terms. The set of difference equations proposed is the following:

\begin{equation}\label{eq:AAL_timedisc}
\left\{
\begin{array}{l}
\frac{h_{n+1}-h_n}{\phi(\Delta t)}=ax_n \\

\frac{x_{n+1}-x_n}{\phi(\Delta t)}=r_1 x_n-h_nx_{n+1}-\frac{\nu p x_{n+1}z_n}{b+px_n+z_n}+\alpha_{12}\frac{x_ny_n}{\epsilon_1+y_n} \\ 

\frac{y_{n+1}-y_n}{\phi(\Delta t)}=r_2y_n+\alpha_{21}\frac{x_ny_n}{\epsilon_2+x_n}-d_2y_ny_{n+1} \\

\frac{z_{n+1}-z_n}{\phi(\Delta t)}=-\frac{\nu z_nz_{n+1}}{b+px_n+z_n}-ky_nz_{n+1}, \\
\end{array}
\right.
\end{equation}
with the conditions $h_0=0,\, x_0\ge 0, \,y_0\ge 0\,z_0\ge 0$. Each right term of the equations \ref{eq:AAL_timedisc} are obtained by taking advantage from \eqref{eq:pos_cond_app}. The \eqref{eq:AAL_timedisc} were explicated respect to $h_{n+1},x_{n+1},y_{n+1},z_{n+1}$ as follows:
\begin{equation}\label{eq:AAL_timedisc_expl}
\left\{
\begin{array}{l}
h_{n+1}=\Phi_1(h_n,x_n,y_n,z_n) \\

x_{n+1}=\Phi_2(h_n,x_n,y_n,z_n) \\

y_{n+1}=\Phi_3(h_n,x_n,y_n,z_n) \\

z_{n+1}=\Phi_4(h_n,x_n,y_n,z_n) \\
\end{array}
\right.
\end{equation}
in which
\begin{equation}\label{eq:AAL_timedisc_expl_1}
\left\{
\begin{array}{l}
\Phi_1(h_n,x_n,y_n,z_n)= a x_n \phi(\Delta t) +h_n\\

\Phi_2(h_n,x_n,y_n,z_n)= \frac{x_n (b+p x_n+z_n) \big(\left(\epsilon_1+y_n\right) \left(r_1 \phi(\Delta t)+1\right)+\alpha _{12} y_n \phi(\Delta t) \big)}{\left(\epsilon_1+y_n\right) \big(\phi(\Delta t) (h_n (b+p x_n+z_n)+\nu  p z_n)+b+p x_n+z_n\big)}\\

\Phi_3(h_n,x_n,y_n,z_n)= \frac{y_n \left(\epsilon_2+x_n\right) \left(r_2 \phi +1\right)+\alpha _{21} x_n y_n \phi(\Delta t) }{\left(\epsilon_2+x_n\right) \left(d_2 y_n \phi(\Delta t) +1\right)}\\

\Phi_4(h_n,x_n,y_n,z_n)= \frac{z_n (b+p x_n+z_n)}{\phi(\Delta t)\big(k y_n (b+p x_n+z_n)+\nu z_n\big)+b+p x_n+z_n}.\\
\end{array}
\right.
\end{equation}
All the difference equations of \eqref{eq:AAL_timedisc_expl} are positive, being positive all the parameters and variables.

\subsection{Steady-state analysis}\label{subsec:steady-states}

To find the steady-states of \eqref{eq:AAL_timedisc_expl_1} the
following condition has to be satisfied:
\begin{equation}\label{eq:AAL_steady_states}
\left\{
\begin{array}{l}
\Phi_1(h_n,x_n,y_n,z_n)= h_n\\

\Phi_2(h_n,x_n,y_n,z_n)= x_n\\

\Phi_3(h_n,x_n,y_n,z_n)= y_n\\

\Phi_4(h_n,x_n,y_n,z_n)= z_n\\
\end{array}
\right.
\end{equation}
By analytically solving the \eqref{eq:AAL_steady_states}, the following steady-states $\Gamma=(\tilde h, \tilde x,\tilde y,\tilde z)$ were found: 
\begin{align}
\Gamma^1 = (\tilde h_1, 0,0,0)\,\,\,and\,\,\,\Gamma^2 = \bigg(\tilde h_2, 0,\frac{r_2}{d_2},0\bigg).
\end{align}
The first equation in \eqref{eq:AAL_steady_states} implies that only $\tilde x_n=0$ is admissible for $\phi(\Delta t)>0$, while it gives no information about the asymptotic values $\tilde h_1$ and $\tilde h_2$. In order to estimate them, and complete the knowledge about $\Gamma^1$ and $\Gamma^2$, a numerical simulation was carried out by assigning the following values to the parameters: $r_1=r_2=0.3$, $a=0.000005$, $k=0.005$, $\nu=1$, $b=0$ and $p=1$ according to \cite{Houdkova2006}; $\alpha_{12}=\alpha_{21}=0.6$, $d_2=0.01$ and $\epsilon_1=\epsilon_2=0.3$ according to \cite{HollandDeAngelis2010, gabbriellini2014}; the discretization parameters are $dt=10^{-5}$ and $q=1.5$. After $10^7$ iterations, by assuming the initial values $(h_0,x_0,y_0,z_0)=(0,5,0,1)$, it results $\tilde h_1\simeq 0.599525$; let $(h_0,x_0,y_0,z_0)=(0,100,40,40)$, it results $\tilde h_2\simeq 1.796894$. Also, by varying the initial conditions to $(h_0,x_0,y_0,z_0)=(0,0,0,1)$ a new fixed point $\Gamma^0=(0,0,0,0)$ arises. Finally, with $(h_0,x_0,y_0,z_0)=(0,0,40,40)$, a fourth steady-state $\Gamma^3=(0,0,30,0)$ was obtained. 

In order to know the nature of the four steady-states $\Gamma^0-\Gamma^3$, the eigenvalues of the Jacobian matrix need to be calculated. 
\begin{defn}\label{def:nonhyp}
Given $\sigma(J_{\Phi})$ the set of the Jacobian eigenvalues, a steady-state $\tilde\xi\in\mathbb{R}^n$ is defined: 
\begin{itemize}[noitemsep,topsep=1pt]
\item \textit{hyperbolic} if $\nexists\lambda\in\sigma(J_{\Phi})$ so that $|\lambda|=1$, 
\item \textit{nonhyperbolic} if $\exists_1\lambda\in\sigma(J_{\Phi})$ so that $|\lambda|=1$. 
\end{itemize}
\end{defn}
For each steady-state the eigenvalues are summarized in Table 1 and represented in the complex plane of \figurename~\ref{fig:Eigenvalues}: the eigenvalues characterized by the same symbol are relative to the same steady-state. Since all the four steady-states have a unitary eigenvalue, following the Definition \ref{def:nonhyp} they are nonhyperbolic. 

\begin{figure}[!ht]
\centering
\includegraphics[width=0.6\textwidth]{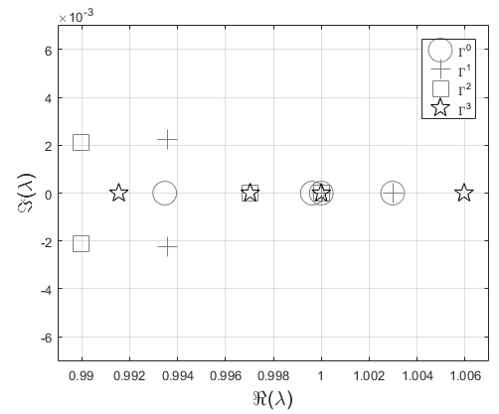}
\caption{\footnotesize{Eigenvalues represented in complex plane. Different symbols are used to distinguish the eigenvalues of the four steady-states.}}\label{fig:Eigenvalues}
\end{figure}

\begin{table}
\footnotesize
\begin{center}
  \begin{tabular}{@{\extracolsep{\fill}} | c | c | c | }
  \hline
  \textbf{Steady-state} & \textbf{(h,x,y,z)} & \textbf{Eigenvalues} \\
  \hline
  \rule{0pt}{3ex}
  $\Gamma^0$  & (0,0,0,0)  & (1, 0.999996, 0.999934, 1.000030)  \\
  \hline
  \rule{0pt}{3ex}
  $\Gamma^1$  & (0.599525,0,0,0)  & (1, 0.999935 $\pm$ 0.000023 $i$, 1.000030)  \\
  \hline
  \rule{0pt}{3ex} 
  $\Gamma^2$  & (1.796894, 0, 30, 0)  & (1, 0.999897 $\pm$ 0.000022 $i$, 0.999970)    \\
  \hline
  \rule{0pt}{3ex} 
  $\Gamma^3$  & (0, 0, 30, 0)  & (1, 1.005961, 0.991550, 0.997031)    \\
  \hline
  \end{tabular}
\caption{Steady-states and eigenvalues of the \eqref{eq:AAL_timedisc_expl}.}
\end{center}
\label{tab:1}
\end{table}

\section{Stability analysis}

For nonhyperbolic fixed points the Hartman--Grobman is not applicable and the effort to infer about their stability is significantly greater. The centre manifold theory helps when some of the eigenvalues of the Jacobian have unitary absolute value and the others have absolute value less than one. It allows reducing the dimensionality of a multi-dimensional dynamical system around a nonhyperbolic fixed point and determine its stability properties by studying the centre manifold \cite{wiggins}. As visible in \figurename~\ref{fig:Eigenvalues} and Table 1, the steady-states $\Gamma^0,\Gamma^1$ and $\Gamma^3$ have one eigenvalue with absolute value falling outside the unitary circle, then the centre manifold theory is not applicable. In the present research the stability of these fixed points is not analytically shown but inferred by means the phase-portrait analysis (see the Subsection \ref{subsec:phaseportrait}). $\Gamma^2$ has one eigenvalue with unitary module and the others less than one, then the centre manifold theory is applicable. Nonetheless, from a strictly ecological point of view, the initial conditions leading to $\Gamma^2$ are the most realistic, then the effort to determinate the stability of $\Gamma^2$ is required.

\subsection{Centre Manifold Theory}
Following the approach of \cite{murakami}, consider the $m$-dimensional system
\begin{align}\label{eq:decomposit0}
X_{n+1}=AX_n+F(X_n),
\end{align} 
where $X_n\in\mathbb{R}^m$ and $F(X_n)=O(||X||^2)$. Let $\lambda$ an eigenvalue of $A$, $p\in\mathbb{R}^{1\times m}$ and $q\in\mathbb{R}^{m}$ the eigenvectors which satisfy 
\begin{align}
Aq=\lambda q,\,\,\,\,\,\,\,\,\,\,pA=\lambda p,\,\,\,\,\,\,\,\,\,\,pq=1.
\end{align}
Let $u=pX\in\mathbb{R}$ and $v=X-qu\in\mathbb{R}^m$, then is simple to show that $X$ can be decomposed as
\begin{align}\label{eq:decomposit}
X=qu+v.
\end{align}
If $A$ has only one eigenvalue such that $|\lambda|=1$ and the other less than one, there exists a function $v=G(u)$ such that $G(0)=0$ and $G'(0)=0$. 
Let the center manifold
\begin{align}\label{eq:v}
v=G(u)=C_2 u^2+C_3 u^3+\ldots,
\end{align}
and 
\begin{align}\label{eq:F}
F(qu+G(u))=F_2u^2+F_3u^3+\ldots,
\end{align}
where $C_2$ and $C_3$ are given by:
\begin{equation}\label{eq:coeff_C}
\left\{
\begin{array}{l}
C_2=(\lambda^2I-A)^{-1}(I-qp)F_2, \\
C_3=(\lambda^3I-A)^{-1}\left((I-qp)F_3-2\lambda pF_2C_2\right), \\
\vdots
\end{array}
\right.
\end{equation}
Substituting \eqref{eq:v} in \eqref{eq:decomposit}:
\begin{align}
X=qu+C_2 u^2+C_3 u^3+\ldots,
\end{align}
then,
\begin{align}\label{eq:centre_manifold}
X=qu+(\lambda^2I-A)^{-1}(I-qp)F_2 u^2+(\lambda^3I-A)^{-1}\big((I-qp)F_3-2\lambda pF_2C_2\big) u^3+\ldots,
\end{align}
in which $F_2$ and $F_3$ can be evaluated by comparison of \eqref{eq:decomposit0} and \eqref{eq:F}.
Given the centre manifold described by \eqref{eq:centre_manifold}, in order to evaluate the stability of the nonhyperbolic fixed point, it is possible to make use of the following theorem:
\begin{thm}[See \cite{Elaydi}]\label{th:nonhyp}
Let $\tilde\xi$ be a fixed point of a map $f$ such that $f'(\tilde\xi)=1$. If $f'(\tilde\xi)$, $f''(\tilde\xi)$, and $f'''(\tilde\xi)$ are continuous at $\tilde\xi$, then the following statements hold:
\begin{itemize}[noitemsep,topsep=1pt]
\item[\textbf{A.}] if $f''(\tilde\xi)\neq 0$, then $\tilde\xi$ is unstable;
\item[\textbf{B.}] if $f''(\tilde\xi)=0$ and $f'''(\tilde\xi)>0$ then $\tilde\xi$ is unstable;
\item[\textbf{C.}] if $f''(\tilde\xi)=0$ and $f'''(\tilde\xi)<0$ then $\tilde\xi$ is asymptotically stable.
\end{itemize}
\end{thm}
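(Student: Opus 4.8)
The plan is to reduce the statement to a one–dimensional normal–form analysis and then read off the behaviour of the scalar orbit $u_{n+1}=f(u_n)$ by elementary monotonicity arguments; nothing beyond Taylor's theorem is required. First I would translate the fixed point to the origin ($\tilde\xi=0$) and use $f\in C^{3}$ near $0$ together with $f(0)=0$ and $f'(0)=1$ to write
\[
f(u)=u+\tfrac12 f''(0)\,u^{2}+\tfrac16 f'''(0)\,u^{3}+o(u^{3})\qquad (u\to 0),
\]
so that in every case $f(u)-u$ equals its first non-vanishing monomial plus a strictly smaller remainder. The proof then consists in fixing a radius $\delta_0>0$ on which that remainder is dominated by the leading monomial, and iterating.

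For part \textbf{A}, assume first $f''(0)>0$ (the case $f''(0)<0$ follows by the symmetry $u\mapsto-u$). Choose $\delta_0>0$ with $f(u)-u\ge\tfrac14 f''(0)\,u^{2}>0$ for $0<u<\delta_0$; then $0$ is the only fixed point of $f$ in $(0,\delta_0)$. For any $u_0\in(0,\delta_0)$ the orbit is strictly increasing while it stays in $(0,\delta_0)$, and it cannot converge to a limit $L\in(0,\delta_0)$ because $f(L)=L$ is impossible there; hence, taking $\varepsilon=\delta_0/2$, the orbit of arbitrarily small positive $u_0$ eventually exceeds $\varepsilon$, which is exactly Lyapunov instability. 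Part \textbf{B} ($f''(0)=0$, $f'''(0)>0$) is handled verbatim with $u^{2}$ replaced by $u^{3}$: now $f(u)-u\ge\tfrac1{12}f'''(0)\,u^{3}>0$ on $(0,\delta_0)$, and the same escaping-orbit argument gives instability.

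For part \textbf{C} ($f''(0)=0$, $f'''(0)<0$) the expansion gives, on a suitable $(-\delta_0,\delta_0)$, $f(u)-u\le\tfrac1{12}f'''(0)\,u^{3}<0$ for $0<u<\delta_0$ and $f(u)-u\ge\tfrac1{12}f'''(0)\,u^{3}>0$ for $-\delta_0<u<0$, while $f(u)=u\bigl(1+\tfrac16 f'''(0)u^{2}+o(u^{2})\bigr)$ keeps the sign of $u$ there. Thus from any $u_0\in(-\delta_0,\delta_0)$ the orbit is monotone, stays on the same side of $0$, and is bounded, so it converges to a fixed point $L$ with $|L|<\delta_0$, forcing $L=0$; monotone convergence also keeps the orbit inside $(-\delta_0,\delta_0)$, so Lyapunov stability holds as well, i.e.\ asymptotic stability. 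Equivalently one may run the argument with $V(u)=u^{2}$: since $\Delta V=(f(u)-u)(f(u)+u)$ is $\le 0$ near $0$ in case \textbf{C} and $>0$ on one side of $0$ in cases \textbf{A}--\textbf{B}, the discrete Lyapunov/instability criteria apply.

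The step I expect to be the only delicate one is the choice of $\delta_0$ in each case, i.e.\ making the $o(u^{k})$ remainder uniformly dominated by the leading monomial on $(0,\delta_0)$ (resp.\ $(-\delta_0,\delta_0)$); this is precisely where the hypothesis that $f''$ and $f'''$ are continuous at $\tilde\xi$ enters, through Taylor's theorem with Lagrange or Peano remainder. Once $\delta_0$ is fixed, the rest reduces to the elementary facts that a monotone bounded orbit converges to a fixed point lying in the same interval, and that in each of cases \textbf{A}--\textbf{C} the point $\tilde\xi$ is the unique fixed point of $f$ in a one-sided neighbourhood because $f(u)-u$ has constant sign there.
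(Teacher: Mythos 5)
Your argument is correct: translating the fixed point to the origin, using the Taylor expansion to pin down the sign of $f(u)-u$ via its first nonvanishing coefficient, and then running the monotone-orbit/cobweb argument is exactly the standard proof of this semistability criterion. The paper itself gives no proof — it quotes the theorem from Elaydi's book — and your reasoning reproduces essentially the argument found there, so there is nothing to flag beyond noting that in case \textbf{A} an increasing orbit bounded by $\delta_0$ could in principle converge to $\delta_0$ itself, which still exceeds $\varepsilon=\delta_0/2$ and so does not affect the instability conclusion.
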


\subsection{Stability of the steady-state $\Gamma^2$}

In this part of the study only the most important steps were reported. The eigenvectors left $p$ and right $q$ associated to the unitary eigenvalue of $\Gamma^2$ are:
\begin{equation}
q=
\begin{pmatrix}
    1 \\
    0 \\
    0 \\
    0
\end{pmatrix}
,\,\,\, p=
\begin{pmatrix}
    1, 0, 0 ,0
\end{pmatrix}.
\end{equation}
The \eqref{eq:AAL_timedisc_expl} was first rewritten by evaluating the Taylor series around $\Gamma^2$, in order to simplify the equations. 
Making the change of variables
\begin{align}
\left(\hat h,\hat x,\hat y,\hat z\right)\longrightarrow \left(\hat h-h,\hat x-x,\hat y-y,\hat z-z\right),
\end{align}
the fixed point was shifted to the origin. The set of difference equations can be written
\begin{equation}\label{eq:AAL_Taylor_0}
\left\{
\begin{array}{l}
\hat h_{n+1}=\hat\Phi_1(\hat h_n,\hat x_n,\hat y_n,\hat z_n) \\

\hat x_{n+1}=\hat\Phi_2(\hat h_n,\hat x_n,\hat y_n,\hat z_n) \\

\hat y_{n+1}=\hat\Phi_3(\hat h_n,\hat x_n,\hat y_n,\hat z_n) \\

\hat z_{n+1}=\hat\Phi_4(\hat h_n,\hat x_n,\hat y_n,\hat z_n), \\
\end{array}
\right.
\end{equation} 
in which $\hat\Phi_i(\hat h_n,\hat x_n,\hat y_n,\hat z_n)$, $i=1,2,3,4$, expresses the function $\Phi(h_n,x_n,y_n,z_n)$ after both Taylor series expansion and the change of variable. The system can be rewritten 
\begin{equation}
\begin{pmatrix}
    \hat h_{n+1} \\
    \hat x_{n+1} \\
    \hat y_{n+1} \\
    \hat z_{n+1}
\end{pmatrix}
= A 
\begin{pmatrix}
    \hat h_{n} \\
    \hat x_{n} \\
    \hat y_{n} \\
    \hat z_{n}
\end{pmatrix} 
+ \begin{pmatrix}
    F_1(\hat h_n,\hat x_n,\hat y_n,\hat z_n) \\
    F_2(\hat h_n,\hat x_n,\hat y_n,\hat z_n) \\
    F_3(\hat h_n,\hat x_n,\hat y_n,\hat z_n) \\
    F_4(\hat h_n,\hat x_n,\hat y_n,\hat z_n)
\end{pmatrix}, 
\end{equation}
in which $A$ is a $4\times 4$ matrix containing the coefficients of the linear terms and $F_i(\hat h_n,\hat x_n,\hat y_n,\hat z_n)$ represent the nonlinear ones. By using the same parameters values adopted in Subsection \ref{subsec:steady-states}, the matrix A is:

\begin{align}
A=\left(
\begin{array}{cccc}
 1 & 5\cdot 10^{-11} & 0 & 0 \\
 0 & 1.00003 & 0 & 6.25\cdot 10^{-7} \\
 0 & -5.99\cdot 10^{-4} & 1.00001 & 0 \\
 0 & 6.25\cdot 10^{-7} & 0 & 0.999991 \\
\end{array}
\right).
\end{align}
With regard to $F(\hat h_n,\hat x_n,\hat y_n,\hat z_n)$, by neglecting the terms with coefficients less than $10^{-7}$, it is:
\begin{align}
F=\left(
\begin{array}{c}
 0 \\
 -2.81\cdot 10^{-3} \hat h \hat x^2+1500.06 \hat h \hat x \hat z- 10^{-5} \hat h \hat x+93.753 \hat x^2-7.5\cdot 10^7 \hat x \hat z \\
 -6.67\cdot 10^{-5} \hat x^2 \hat y+0.002\hat x^2+2\cdot 10^{-5} \hat x \hat y \\
1.4\cdot 10^{-5} \hat x^2 \hat y-93.75 \hat x^2-7.5\hat x \hat y \hat z-1.875\cdot 10^{20} \hat x \hat z^2+7.5\cdot 10^7 \hat x \hat z+93.75 \hat z^2 \\
\end{array}
\right).
\end{align}
Taking into account the first equation of \eqref{eq:AAL_Taylor_0},
\begin{align}
h_{n+1}=\hat\Phi(\hat h_n,\hat x_n,\hat y_n,\hat z_n),
\end{align}
it is possible to show that the centre manifold equation is:
\begin{align}
f(u)=u-0.01782u^3,
\end{align}
that respects the conditions $f''(\tilde u)=0,\,\,\,f'''(\tilde u)<0$, implying that $\Gamma^2$ is asymptotically stable.

A necessary, but not sufficient, condition for bifurcation of a fixed point is that the fixed point is nonhyperbolic \cite{wiggins}; in the present research the bifurcation problem was not addressed due to the complexity of the equations.

\section{Numerical simulations}

\subsection{Phase-space analysis}\label{subsec:phaseportrait}

\begin{figure}[!ht]
\centering
\includegraphics[width=0.9\textwidth]{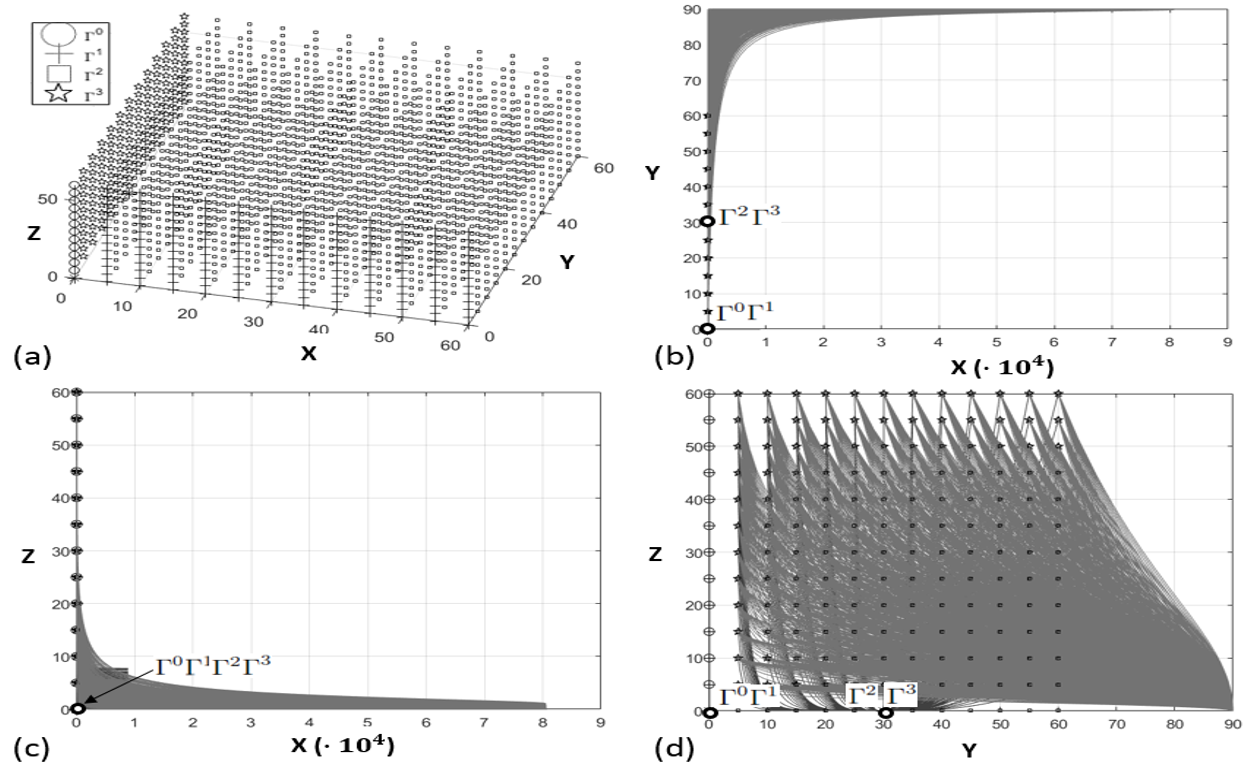}
\caption{\footnotesize{\textbf{(a)} Initial conditions $x_0,y_0,z_0$, assuming $h_0=0$, used to compute the phase-portraits shown in figures \textbf{a}, \textbf{b} and \textbf{c}. Different symbols were used to identify the steady-state they lead to according to the legend (e.g. the initial conditions identified by the squares lead to $\Gamma^2$ steady-state). The plots \textbf{b}, \textbf{c} and \textbf{d} represent the phase-portraits of respectively $x$-$y$, $z$-$x$ and $z$-$x$ populations (note that the $x$-axis is not in scale with $y$ and $z$ axes). The parameters used are: $r_1=r_2=0.3$, $a=5\cdot 10^{-6}$, $k=5\cdot 10^{-4}$, $\nu=1$, $b=0$, $p=1$, $\alpha_{12}=\alpha_{21}=0.6$, $d_2=0.01$, $\epsilon_1=\epsilon_2=0.3$; the discretization parameters are $dt=10^{-5}$ and $q=1.5$. Each steady-state was labeled with a black-white circle.}}\label{fig:phase-space}
\end{figure}

The phase-space analysis of the system \eqref{eq:AAL_timedisc_expl} was carried out and, to facilitate the visualization, three phase-portraits, containing the $x$-$y$, $x$-$z$ and $y$-$z$ populations, are represented respectively in figures \ref{fig:phase-space}\textbf{b},\textbf{c} and \textbf{d}. Each phase-portrait was calculated by considering the initial conditions displayed in \figurename~\ref{fig:phase-space}\textbf{a}: $h_0=0$, $0\le x,y,z\le 60$. The initial conditions were reported also in the three phase-portraits, however, in plots \textbf{b} and \textbf{c}, the $x$-axis is in $10^4$ units due to the important range of variation of the aphids population, then the range of variation of $x$ initial conditions is not recognizable. Each initial condition point of \figurename~\ref{fig:phase-space}\textbf{a} is plotted with a symbol indicating the steady-state it leads to (i.e. the points characterized by $x=0,y=0,0\le z\le 60$ are marked by an empty circle and lead to the steady-state $\Gamma^0$, according to the legend).

Note that where the steady-states are superimposed each other it means that the variables considered in that phase-portrait do not make the difference, e.g., in $x$-$z$ plane represented in \figurename~\ref{fig:phase-space}\textbf{b}, $\Gamma^2$ and $\Gamma^3$ are coincident since is the variable $h$ (not shown) that makes the difference.

About the phase-space analysis, it is possible to deduce that: 
\begin{itemize}
\item the steady-state $\Gamma^0$ is reached if initial populations are characterized by $x_0=y_0=0$; $\Gamma^1$ is reached if $y_0=0$, regardless of $x_0$ and $z_0$; 
\item $\Gamma^3$ is reached if $x_0=0$, regardless of $y_0$ and $z_0$; 
\item the system evolves toward $\Gamma^2$ for any other initial condition. In this case, the phase-portraits show that the trajectories experience a peak of the $x$ and $y$ populations and a convergence towards $\Gamma^2$. 
\end{itemize}
Numerical values of the maximum populations will be given in the next Section.

\subsection{Time series analysis}

\begin{figure}[!ht]
\centering
\includegraphics[width=\textwidth]{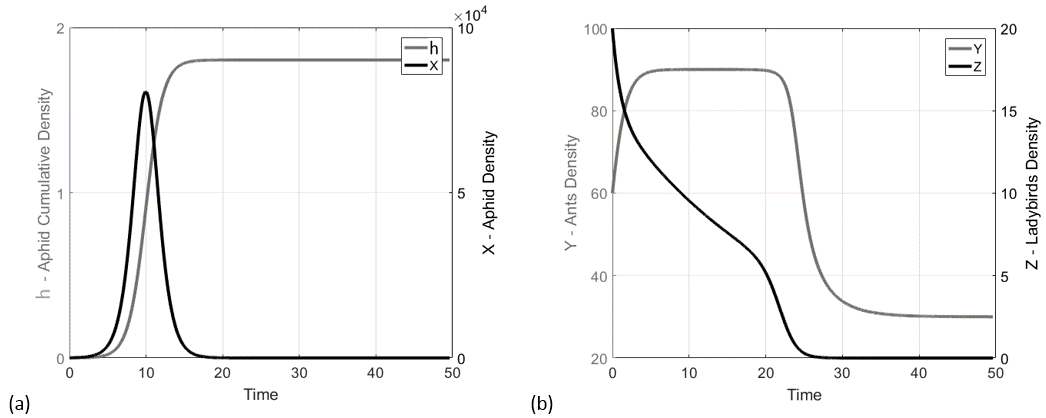}
\caption{\footnotesize{Trajectories of $h,x,y,z$ obtained by iteration of \eqref{eq:AAL_timedisc_expl}, by assuming the same parameters values of \figurename~\ref{fig:phase-space}:\textbf{(a)} $h$ is represented by the gray line and $x$ by the black one; \textbf{(b)} $y$ is represented by the gray line and $z$ by the black one.}}\label{fig:trajectories}
\end{figure}

\begin{figure}[!ht]
\centering
\includegraphics[width=\textwidth]{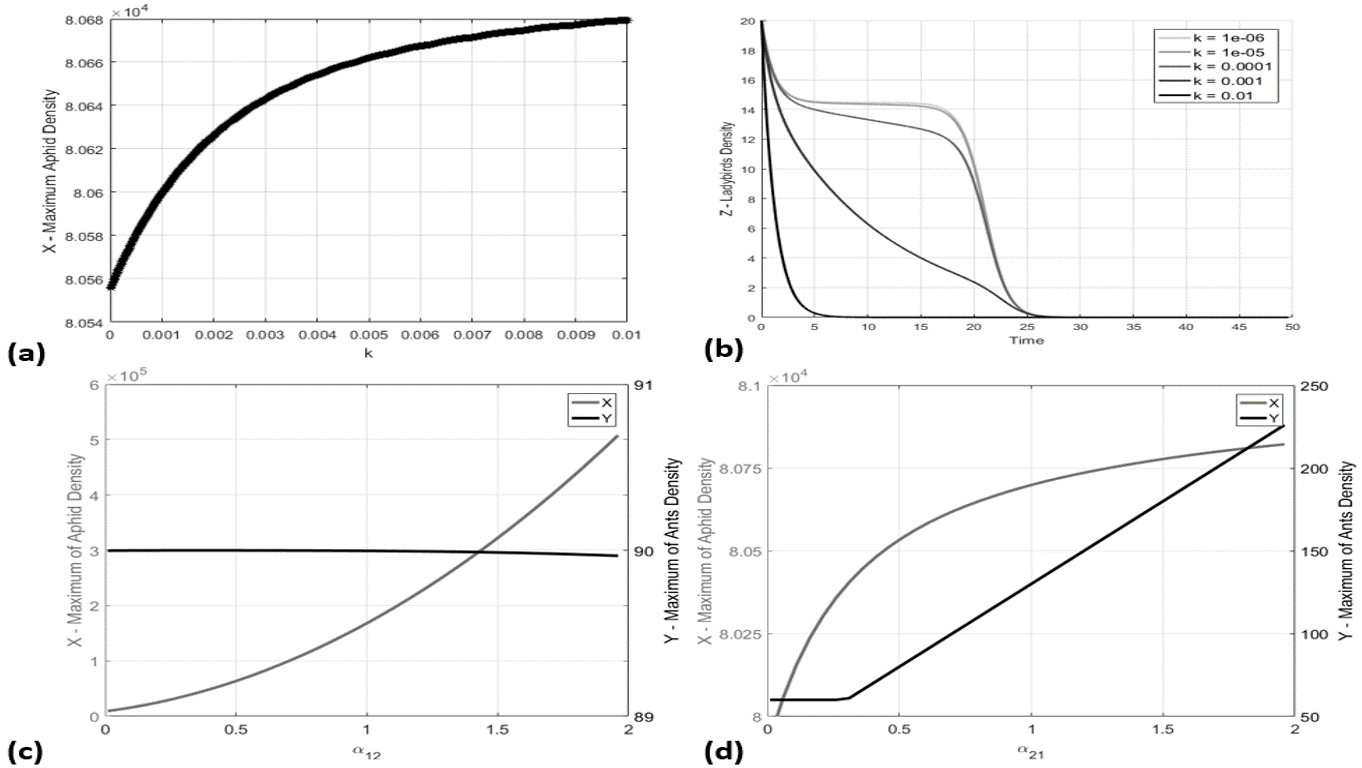}
\caption{\footnotesize{\textbf{(a)} Maximum of the aphid density versus the coefficient of interspecific competition; \textbf{(b)} Evolution of the ladybird density, evaluated for $k=10^{-2},10^{-3},10^{-4},10^{-5},10^{-6}$ (progressively darker colors are assigned to greater $k$ values). \textbf{(c)} Maximum of aphids (gray) and ants (black) density versus $\alpha_{12}$. \textbf{(d)} Maximum of aphid (gray) and ants (black) density versus $\alpha_{21}$.}}\label{fig:k_alpha-parameters}
\end{figure}

The trajectories produced by the system \eqref{eq:AAL_timedisc_expl} were depicted in \figurename~\ref{fig:trajectories} by using the initial conditions $h_0=0,x_0=y_0=60,z_0=20$. The trajectories for the couples $h,x$ and $y,z$ were represented respectively in \figurename~\ref{fig:trajectories}\textbf{a} and \textbf{b}. As we can see, the system evolves towards the steady-state $\Gamma^2$. The aphid density grows up to a maximum of $8058$ individuals at time $t=9.96$ and drops to 0 individuals for $t\gtrsim 20$. As a consequence of the aphids-ants mutualistic relationship, the ants increase their density population up to 90 individuals in the period of greatest abundance of aphids, although they must face a sharp decline and a stabilization on $30$ individuals after the density population of aphids collapses. The ladybird population density results monotonic decreasing, becoming zero together with the aphids. 

Nonetheless, the shape of their population curve is sensitively influenced by the $k$ coefficient of interspecific competition between ants and ladybirds. In \figurename~\ref{fig:k_alpha-parameters}\textbf{b} the $z$ population was calculated by assuming $k$ in range $10^{-5}$--$10^{-2}$. It is possible to see that lower $k$ values make the ladybird population supported by the aphid colony size. In particular, with $k=10^{-5}$--$10^{-6}$ the ladybird colony is almost constant during the period of abundance of the aphids, while greater values imply an increasingly rapid decline of the ladybirds population size. The maximum of the aphids population density is also affected by the magnitude of $k$, although there is a variation of only 100 individuals about between $k=10^{-5}$ and $10^{-2}$, see \figurename~\ref{fig:k_alpha-parameters}\textbf{a}. 

The influence of the parameter $\alpha_{12}$, quantifying the advantages that the presence of ants induces on the growth of aphids, was also studied. From the curves represented in \figurename~\ref{fig:k_alpha-parameters}\textbf{c} it is possible to conclude that the maximum of the aphids abundance sensitively increases with $\alpha_{12}$, while this parameter scarcely influences the ants density. Conversely, the parameter $\alpha_{21}$, quantifying the advantages that the presence of aphids induces on the growth of ants, sensitively influences the growth of the ant colony, determining only a small percentage growth of the aphid colony, as visible in \figurename~\ref{fig:k_alpha-parameters}\textbf{d}. 

\section{Conclusions}

By taking advantage of the mathematical model proposed by Kindlmann \& Dixon in 2003 \cite{Kindlmann2003} to describe the ladybirds-aphids (prey-predator) population density and the model built by Holland \& De Angelis in 2010 \cite{HollandDeAngelis2010} describing the population dynamics of two mutualistic species, in this research a new set of differential equations was proposed to describe the within-season population dynamics of an ecological patch hosting a community of ladybirds, aphids and ants. This framework describes the population dynamics in presence of both prey-predator and mutualistic relationships, based on the following facts: 
\begin{itemize}
\item between aphids and ants exists a well documented mutualistic relationship; 
\item ladybirds behaves as predator of aphids; 
\item ladybirds and ants may experience an interspecific competition.
\end{itemize}

The proposed model, first formulated in continuous-time, then in discrete-time domain by taking advantage of the NSFD scheme, is characterized by four nonhyperbolic steady-states $\Gamma^0$-$\Gamma^3$. The phase-space analysis highlighted that the initial conditions most realistic from an ecological point of view lead to the steady-state $\Gamma^2$, characterized by a collapsed colony of aphids and ladybirds and a population of ants equal to $r_2/d_2$ ($r_2$ is their maximum potential growth rate and $d_2$ their self-limiting term). The asymptotic stability of $\Gamma^2$ was also shown via the centre manifold theory. It is an feature that the mutualistic relationship does not influence any of the asymptotic population densities.

The time-series analysis shown that, with sufficiently large initial populations, the populations of aphids and ladybirds reach a maximum, then definitely collapse according to the $\Gamma^2$ population densities. Then, the presence of aphids and ladybirds constitutes a transitory phenomenon. Nonetheless, the presence of the mutualistic relationship determine a sensitive grows of the aphids and ants population densities peaks. 

Since the four steady-states are nonhyperbolic, future research activities should be carried out to investigate the presence of bifurcations. Furthermore, a validation with experimental data, actually not publicly available to the author's knowledge, should be carried out in order to validate the reliability of the AAL model in predicting the populations evolution.
 
\section*{Acknowledgments}
I would like to thank Professors W. Nazeer, M. Imran and to an anonymous reviewer for giving me precious suggestions addressed to improve the quality of the paper.

\appendix

\newpage

\bibliographystyle{plain}

\label{lastpage}
\end{document}